\newcommand{\D}{\mathcal{D}}
\newcommand{\A}{\mathcal{A}}
\renewcommand{\H}{\mathcal{H}}
\renewcommand{\L}{\mathcal{L}}
\newcommand{\N}{\mathbb{N}}
\newcommand{\R}{\mathbb{R}}
\newcommand{\C}{\mathbb{C}}
\newcommand{\unit}{\mathbf{1}}
\newcommand{\ra}{\rightarrow}
\newcommand{\U}{\operatorname{U}}
\newcommand{\tr}{\operatorname{tr }}
\newcommand{\rmd}{\operatorname{d\!}}
\newcommand{\rmi}{\operatorname{i}}
\newcommand{\rme}{\operatorname{e}}
\newcommand{\slim}{\operatorname*{s-lim }}
\renewcommand{\d}{\delta}
\renewcommand{\t}{\otimes}
\newcommand{\spa}{\operatorname{span }}
\newcommand{\Sch}{\mathcal{S}}
\newcommand{\boxy}{\hfill $\Box$\vspace{5mm}}
\renewcommand{\lg}{\mathfrak{g}}
\def\d{\mathrm{d}}
\newtheorem{theorem}{Theorem}
\theoremstyle{definition}\newtheorem{definition}[theorem]{Definition}
\theoremstyle{definition}\newtheorem{remark}[theorem]{Remark}
\theoremstyle{definition}\newtheorem{example}[theorem]{Example}
\theoremstyle{definition}\newtheorem{conjecture}[theorem]{Conjecture}
\theoremstyle{definition}\newtheorem{problem}[theorem]{Problem}
\numberwithin{theorem}{section}
\title{Dynamical Decoupling of Unbounded Hamiltonians}
\author{Christian Arenz}
\address{Frick Laboratory, Princeton University, Princeton NJ 08544, USA}
\author{Daniel Burgarth}
\address{Department of Mathematics, Aberystwyth University, Aberystwyth SY23 3BZ, UK}
\author{Paolo Facchi}
\address{Dipartimento di Fisica and MECENAS, Universit\`a di Bari, I-70126 Bari, Italy}
\address{INFN, Sezione di Bari, I-70126 Bari, Italy}
\author{Robin Hillier}
\address{Department of Mathematics and Statistics, Lancaster University, Lancaster LA1 4YF, UK}
\begin{document}
\begin{abstract}
We investigate the possibility to suppress interactions between a finite dimensional system and an infinite dimensional environment through a fast sequence of unitary kicks on the finite dimensional system. This method, called dynamical decoupling, is known to work for bounded interactions, but physical environments such as bosonic heat baths are usually modelled with unbounded interactions, whence here we initiate a systematic study of dynamical decoupling for unbounded operators. We develop a sufficient decoupling criterion for arbitrary Hamiltonians and a necessary decoupling criterion for semibounded Hamiltonians. We give examples for unbounded Hamiltonians where decoupling works and the limiting evolution as well as the convergence speed can be explicitly computed. We show that decoupling does not always work for unbounded interactions and provide both physically and mathematically motivated examples.
\end{abstract}
\maketitle
\tableofcontents{}
\thispagestyle{empty}

\section{Introduction and overview}

A powerful strategy to protect a quantum system from decoherence is dynamical decoupling~\cite{LB13}. The application of frequent and instantaneous unitary operations (kicks), which correspond to strong classical pulses applied to the system, makes it possible to average the system-environment interactions to zero. Originally dynamical decoupling dates back to pioneering work of Haeberlen and Waugh~\cite{HW68, WHH68}, who developed pulse sequences, such as spin-echo techniques, in order to increase the resolution in nuclear magnetic resonance. Later, these schemes were generalized by Viola and Lloyd~\cite{VKL98,VKL99, VK05}, establishing a theoretical framework that allows to suppress generic system-environment interactions.  Its particular strength is that it is applicable even if the details of the system-environment coupling are unknown.

Since perfect decoupling only happens in the limit of infinitely frequent kicks, in practice it is important to understand the convergence speed. In finite dimensions, error estimates are given in terms of the higher orders of the Magnus expansion or the Dyson series~\cite{LB13,VK05}. Here the existence of and the speed of convergence to the decoupled dynamics relies on norm bounds of the Hamiltonian~\cite{KL08}, allowing one to prove that dynamical decoupling works arbitrarily well on a finite time scale.   

However, real physical environments, such as the free electromagnetic field, are (to a good approximation) infinite dimensional. In particular, the description of system-environment interactions through potentially unbounded operators makes it challenging to decide whether dynamical decoupling works and, moreover, estimate the time-scales necessary to efficiently dynamically decouple the system from the environment. Series expansions are a touchy business \cite{Reed} and norm bounds diverge. The main purpose of this paper is to establish criteria and examples for  dynamical decoupling  of unbounded Hamiltonians.

From a physical perspective, dynamical decoupling  has to be faster than the fastest timescale of the overall dynamics~\cite{VKL99}, and it is typically argued 
that dynamical decoupling only works for environments yielding non-exponential decay~\cite{LB13}. It is argued that a `Zeno' region of
non-exponential decay (Fig.~\ref{fig:Exponential-and-non-exponential})
determines the time-scale for dynamical decoupling. 
However, this is a heuristic argument rather than a rigorous mathematical conclusion and we will provide several counterexamples to it below.
In fact, it is interesting to note that to decide whether dynamical decoupling works for infinite dimensional environments, the full Hamiltonian must be provided. That is, the reduced dynamics does not provide enough information, and for the same reduced dynamics there can be dilations (given by system-environment Hamiltonians and environment initial states) which can be decoupled, whereas others cannot. An example is given by qubit dephasing, for which the shallow pocket model~\cite{AHFB15} provides a dilation which can be decoupled, whereas its Cheborev-Gregoratti dilation~\cite{Wal} was recently shown to be not amenable to decoupling~\cite{GN17}. These two dilations can be considered as two extreme cases: the former being highly non-Markovian and the latter very singular with built-in Markovian properties. The true physical models are likely to be found in between such extremes, and it is important to find general criteria for decoupling.

\begin{figure}
\centering{}\includegraphics[width=0.8\columnwidth]{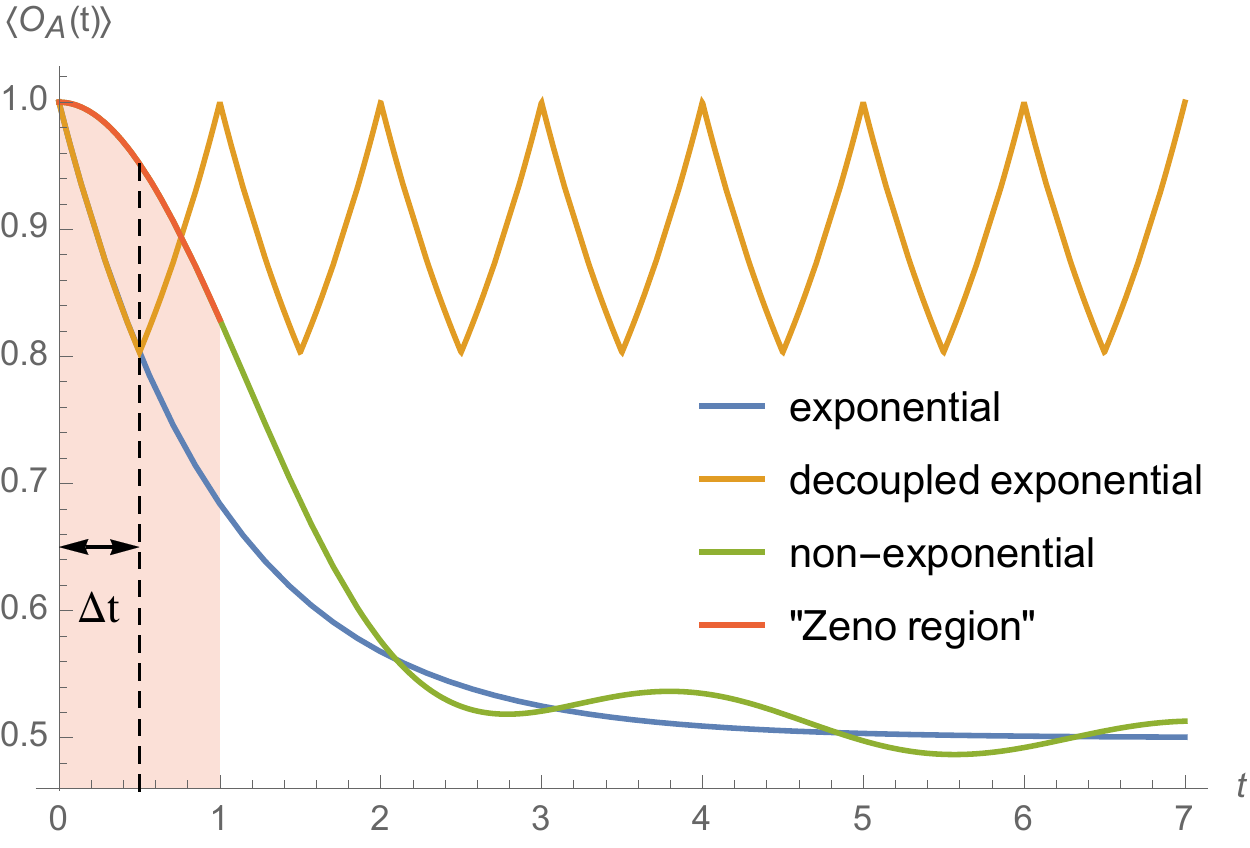}\caption{\label{fig:Exponential-and-non-exponential}Exponential and non-exponential
decay of some expectation $\left\langle O_{A}(t)\right\rangle $ and
the quadratic `Zeno region'. In the case of the shallow pocket model, the observable
shown is $p_{+}(t)/p_{+}(0)$ with $p_{+}(t)=\text{tr}\left\{ |+\rangle_{A}\langle+|\Lambda_{t}(\rho_{A})\right\} ,$
and the dynamics in presence of decoupling pulses with pulse time steps $\Delta t=0.5$
is shown. The non-exponential curve is an approximation of the shallow
pocket model with cut-off parameter $|x|\le 2$ in the Cauchy distribution. See Ex.~\ref{ex:sp} for details.}
\end{figure}

  In Section~\ref{sec:suff}, based on Trotter's product formula, we give a sufficient criterion for dynamical decoupling in Thm.~\ref{th:suff}, generalizing~\cite{AHFB15}. As an example we discuss the shallow pocket model~\cite{AHFB15},  which yields exponential decay but can be decoupled on arbitrary time-scales. Then we provide several generalizations which can be decoupled, but for which the time-scale of decoupling is non-trivial. Here we explicitly provide the corresponding time-scales in order to dynamically decouple the system from the environment and show that the efficiency depends on the initial bath state. Finally we provide an example showing that Thm.~\ref{th:suff} is sufficient, but not necessary for successful decoupling.  

In Section~\ref{dec:nec} we discuss lower bounded Hamiltonians, for which more can be
said about the convergence of the Trotter limit. Thm.~\ref{th:nec} provides
a necessary condition for dynamical decoupling of such Hamiltonians.
This is physically relevant as most reasonable interaction Hamiltonians
are unbounded above but bounded below. We provide an abstract example of a Hamiltonian where dynamical decoupling does not work. Finally, in Section~\ref{ex:Lee} we provide a generalization of the  Friedrichs-Lee
model which gives rise to an amplitude damping channel. We find that this
model cannot be dynamically decoupled and provide a physical interpretation.

\section{Prerequisites}\label{sec:prereq}

Consider a quantum mechanical system which is coupled to an environment. We suppose the system Hilbert space $\H_s\simeq \C^d$ to be $d$-dimensional (with $d$ finite) and the  environment Hilbert space $\H_e$ infinite-dimensional and separable.
We write $\H=\H_s\otimes\H_e$ for the total Hilbert space and $H$ for the total Hamiltonian, a self-adjoint operator with domain $\D(H)\subset \H$.

We assume that the initial state is uncorrelated $\rho=\rho_{s}\otimes\rho_{e}$,
where $\rho_s$ and $\rho_{e}$ are non-negative trace-class operators with $\tr\rho_s=\tr\rho_e=1$. With $U(t)=\rme^{-\rmi t H}$ we
refer to 
\begin{equation}
\label{eq:dynamical_map}
\rho_s \mapsto \Lambda_{t}(\rho_{s}):=\textrm{tr}_{\H_e}\left\{ U(t)(\rho_{s}\otimes\rho_{e})U^{*}(t)\right\}
\end{equation}
as the  dynamical map on state $\rho_s$ at time $t$. Here $\textrm{tr}_{\H_e}$ denotes the partial trace with respect to the factor $\H_e$.
The typical feature of reduced dynamics is
that certain expectation values 
$$\left<O_{s}(t)\right>:=\textrm{tr}\left\{ O_{s}\Lambda_{t}(\rho_{s})\right\} 
$$
of observables $O_{s}\in B(\H_s)$, the space of (bounded) linear operators on $\H_s$, can \emph{decay irreversibly}. This is particularly
so if the dynamical map $\Lambda_{t}$ has the semigroup property
$\Lambda_{t_1}\circ\Lambda_{t_2}=\Lambda_{t_1+t_2}$, for all $t_1,t_2 \geq 0$. In such a case, the reduced
dynamics $\rho_s(t) := \Lambda_{t}(\rho_{s})$ is described by a GKLS master equation $\dot{\rho}_s(t)=\mathcal{L}\rho_s(t)$
with generator 
\begin{equation}
\mathcal{L}(\rho_s)=- \rmi [H_s,\rho_s]+\sum_{i=1}^{d^{2}}\gamma_{i}\Big(L_{i}\rho_s L_{i}^{*}-\frac{1}{2}L_{i}^{*}L_{i}\rho_s-\frac{1}{2}\rho_s L_{i}^{*}L_{i}\Big),
\end{equation}
for all system densitity matrices $\rho_s$, where $\gamma_{i}\ge0$ and $L_{i}, H_s\in B(\H_s)$ and $H_s$ moreover self-adjoint.

\begin{definition}\label{def:dd}
A \emph{decoupling set for $\H_s$} is a finite group of unitary operators $V\subset \U(\H_s)$ such that
\[
\frac{1}{|V|} \sum_{v\in V} v x v^* = \frac{1}{d} \tr(x) \unit_{\H_s},\quad \text{for all } x\in B(\H_s).
\]
Let $N$ be a multiple of the cardinality $|V|$. A \emph{decoupling cycle} of length $N$ is a cycle $(v_1, v_2,\ldots,v_N)$  through $V$, that reaches each element of $V$ the same number of times.
\end{definition}
In~\cite{ABH17} it is shown that such a decoupling set always exists but it is usually not unique. Obviously, given a decoupling cycle, one gets
$$
\frac{1}{N} \sum_{k=1}^N v_k x v_k^* = \frac{1}{|V|} \sum_{v\in V} v x v^*.
$$

Dynamical decoupling on $\H_s\t\H_e$ is now implemented by applying the decoupling operations $v\t\unit_{\H_e}$ instantaneously in time steps $\tau>0$. To shorten notation, we shall simply write $v$ instead of $v\t\unit_{\H_e}$ when confusion is unlikely. In~\cite{ABH15,ABH17} we discuss a random implementation of these decoupling operations while here we restrict ourselves to a deterministic implementation since our focus is rather on the unboundedness of $H$. To be precise, consider a decoupling cycle  of unitaries $(v_1,v_2,\ldots,v_N)$ and apply them to the system periodically, so the total time evolution unitary after one decoupling cycle will be given by
\begin{equation}\label{eq:decevo}
 v_N \rme ^{\rmi \tau H}v_N^* v_{N-1} \rme ^{\rmi \tau H}v_{N-1}^* \cdots v_2\rme ^{\rmi \tau H}v_2^* v_1 \rme ^{\rmi \tau H}v_1^*.
\end{equation}
We can now split a given time interval $[0,t]$ into $nN$ steps and apply the decoupling cycle of length $N$ there $n$ times. Thus the following definition makes sense:

\begin{definition}\label{def:ddworks}

For given Hamiltonian $H$ and decoupling set $V$, we say that \emph{dynamical decoupling works specifically} if there is a decoupling cycle 
$(v_1, v_2,\ldots,v_N)$ and there is a self-adjoint operator $B$ on $\H_e$ such that
\[
\slim_{n\ra\infty} \Big( \rme ^{\rmi \frac{t}{n N} v_1Hv_1^*} \cdots \rme ^{\rmi \frac{t}{n N} v_NHv_N^*}\Big) ^n = \rme^{\rmi t (\unit_{\H_s}\t B)},
\]
uniformly for $t$ in compact intervals of $\R$.

We say that \emph{dynamical decoupling works uniformly} if there is a self-adjoint operator $B$ on $\H_e$ such that, for every decoupling cycle $(v_1, v_2,\ldots,v_N)$,
\begin{equation}\label{eq:decworks}
\slim_{n\ra\infty} \Big( \rme ^{\rmi \frac{t}{n N} v_1Hv_1^*} \cdots \rme ^{\rmi \frac{t}{n N} v_NHv_N^*}\Big) ^n = \rme^{\rmi t (\unit_{\H_s}\t B)},
\end{equation}
uniformly for $t$ in compact intervals of $\R$.
\end{definition}

The physical interpretation is that, in the limit where time steps go to $0$, only the environment evolves. From the physical point of view the strong topology is satisfactory. Indeed, one gets norm convergence (that is uniform rate) on $\H_s$ for  a fixed environment state $\rho_e$, for example a thermal state.

It is unclear whether ``specifically" in Definition \ref{def:ddworks} is really weaker than ``uniformly" or whether the existence of one decoupling cycle which works would in fact imply that all decoupling cycles work. Intuitively, one might  expect that the order is irrelevant  as a consequence of the homogenisation effect of the limit.

To conclude the prerequisites, we will frequently use the following convention: if $a_k$, with $k=1,\ldots,N$, are in $B(\H)$ then we write
\[
\prod_{k=1}^N a_k := a_N \cdots a_1
\]
for the product in $B(\H)$ with this specific order.

\section{A sufficient condition for dynamical decoupling}\label{sec:suff}

\begin{theorem}\label{th:suff}
Let $V$ be a decoupling set for $\H_s$, and $H: \D(H) \to \H$ be self-adjoint. 
If the sum $\sum_{v\in V} (v\t \unit_{\H_e}) H (v\t\unit_{\H_e})^*$ is essentially self-adjoint on  the intersection of the domains, $\D= \bigcap_{v\in V} v \D(H)$, then dynamical decoupling works uniformly for $H$.
\end{theorem}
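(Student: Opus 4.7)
The plan is to combine three ingredients: the Trotter--Kato product formula for a sum of $N$ self-adjoint operators, the elementary counting identity for a decoupling cycle, and the twirl property of $V$ applied to the spectral projections of the resulting limit generator.

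Fix a decoupling cycle $(v_1,\ldots,v_N)$ and set $A_k=(v_k\t\unit_{\H_e})\,H\,(v_k\t\unit_{\H_e})^*$, which is self-adjoint with domain $v_k\D(H)$. Because the cycle visits each element of $V$ exactly $N/|V|$ times,
\[
\sum_{k=1}^N A_k=\frac{N}{|V|}\sum_{v\in V}vHv^*\quad\text{on}\quad\D=\bigcap_{v\in V}v\D(H)=\bigcap_{k=1}^N\D(A_k),
\]
which by hypothesis is essentially self-adjoint on $\D$. The Trotter--Kato product formula for $N$ self-adjoint summands therefore yields, uniformly for $t$ in compact intervals,
\[
\slim_{n\to\infty}\bigl(\rme^{\rmi(t/nN)A_1}\cdots\rme^{\rmi(t/nN)A_N}\bigr)^n=\rme^{\rmi tK},\qquad K:=\frac{1}{|V|}\,\overline{\sum_{v\in V}vHv^*}.
\]
Since $\sum_{v\in V}vHv^*$ is manifestly independent of the cycle, so is $K$; uniform decoupling will follow as soon as $K$ is identified with $\unit_{\H_s}\t B$ for some self-adjoint $B$ on $\H_e$.

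For that identification, observe that for every $w\in V$ the bijection $v\mapsto wv$ of $V$ leaves $\D$ invariant and yields $w\tilde Hw^*=\tilde H$ on $\D$, where $\tilde H=\frac1{|V|}\sum_{v\in V}vHv^*$. Since $w$ is a bounded unitary acting only on the $\H_s$ factor, conjugation by $w$ commutes with taking the closure, whence $wKw^*=K$; equivalently, every spectral projection of $K$ lies in the commutant $(V\t\unit_{\H_e})'$. The defining identity of a decoupling set means that the twirl $x\mapsto\frac1{|V|}\sum_vvxv^*$ on $B(\H_s)$ has image $\C\unit_{\H_s}$, so the commutant of $V$ in $B(\H_s)$ is $\C\unit_{\H_s}$, and the commutation theorem for tensor products of von Neumann algebras gives $(V\t\unit_{\H_e})'=\unit_{\H_s}\t B(\H_e)$. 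Hence every spectral projection of $K$ has the form $\unit_{\H_s}\t F_\lambda$, and the spectral theorem produces $K=\unit_{\H_s}\t B$ with $B=\int\lambda\,dF_\lambda$.

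The main obstacle is the last step: lifting the algebraic twirl identity from bounded operators to the unbounded self-adjoint closure $K$. Concretely, one must verify that $wKw^*=K$ holds as an equality of self-adjoint operators (and not merely formally on $\D$), and then use the commutation theorem to translate strong commutativity with $V$ into the tensor-product form $\unit_{\H_s}\t B$. By contrast, the Trotter--Kato step and the cycle-counting identity are essentially routine; the essential-self-adjointness hypothesis is precisely what makes the Trotter limit exist, makes it cycle-independent, and thus delivers uniform decoupling.
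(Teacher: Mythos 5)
Your proof is correct and follows the same Trotter/Chernoff skeleton as the paper, but the key identification of the limit generator as $\unit_{\H_s}\otimes B$ is done by a genuinely different argument. The paper passes directly to the unitary group: having noted that the closure commutes with every $v\in V$, it writes $U_t=\frac{1}{|V|}\sum_{v\in V}vU_tv^*=\unit_{\H_s}\otimes\frac1d\tr_{\H_s}(U_t)$ by the defining twirl property of a decoupling set (applied factor-wise, legitimate since $\H_s$ is finite-dimensional), and then invokes Stone's theorem to read off $U_t=\unit_{\H_s}\otimes\rme^{\rmi Bt}$. You instead push the symmetry $wKw^*=K$ (justified correctly by the fact that conjugation by a bounded unitary commutes with operator closure) down to the spectral projections of $K$, identify the commutant $(V\otimes\unit_{\H_e})'$ via $V'=\C\unit_{\H_s}$ together with the von Neumann commutation theorem for tensor products, and reconstruct $K=\unit_{\H_s}\otimes B$ from its spectral resolution. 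Both routes are valid. The paper's approach is more elementary — it stays entirely at the level of bounded unitaries and the decoupling identity, with Stone's theorem doing the unbounded bookkeeping — and makes the averaging mechanism visible. Your approach is more structural: it isolates the commutation property $wKw^*=K$ as an honest operator identity, reduces to bounded spectral projections, and then applies standard von Neumann algebra machinery; this arguably scales more transparently to situations where the decoupling group is replaced by a more general symmetry, at the cost of invoking the commutation theorem. The concern you flag about lifting the twirl from bounded operators to the unbounded $K$ is exactly the point, and your resolution via spectral projections is sound; the paper sidesteps it by twirling the bounded unitaries $U_t$ instead.
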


\begin{proof}
The theorem follows from a straight-forward generalisation of the Trotter product formula~\cite[Cor.11.1.6]{JL} to $N$ factors. More precisely, given a decoupling cycle $(v_1,\ldots,v_N)$ in $V$, define the function
\[
F(t) := \prod_{k =1}^{N} \rme ^{\rmi\frac{t}{N} v_k Hv_k^*}, \quad t\in\R_+.
\]
Then $F:\R\ra B(\H)$ is a strongly continuous function with $F(0)=\unit_{\H}$. Moreover, we get
\begin{align*}
\frac{F(t)\xi-F(0)\xi}{t} 
=& \frac{\prod_{k} \rme ^{\rmi\frac{t}{N} v_k Hv_k^*} \xi - \xi}{t}
\ra \frac{\rmi}{|V|}\sum_{v\in V} vHv^*\xi, \quad t\ra 0\\
\end{align*}
for all $\xi\in\D$.

Now, we claim that the closure
\[
\frac{1}{|V|}\overline{\sum_{v\in V} vHv^*} = \unit_{\H_s}\t B,
\]
with some  self-adjoint $B$ on $\H_e$. 
Indeed, since the left-hand side is commuting with all $v\in V$, the group $(U_t)_{t\in\R}$ generated by it satisfies the relation
$$
U_t = \frac{1}{|V|} \sum_{v\in V} v U_t v^* =  \unit_{\H_s} \otimes \frac{1}{d}  \tr_{\H_s}(U_t), \qquad \text{for all } t\in\R, 
$$
by Definition~\ref{def:dd} of decoupling set, and thus must be of the form 
\[
U_t =  \unit_{\H_s}\t \rme^{\rmi B t},
\]
by Stone's theorem.

We apply Chernoff's product formula~\cite[Thm.~11.1.2]{JL} to this and obtain that
\[
F(t/n)^n = \Big(\prod_{k=1}^N \rme ^{\rmi \frac{t}{N n} v_kHv_k^*}\Big)^n \ra \rme^{\rmi (\unit_{\H_s}\t B) t},\quad n\ra \infty,
\]
in the strong operator topology and uniformly for $t$ in compact intervals in $\R$. This verifies condition~\eqref{eq:decworks}.
\end{proof}

\begin{example}[Qubit]\label{ex:qubit}

The following construction is a building block that will allow us to create several examples at increasing complexity and transfer results about the Trotter formula to the context of dynamical decoupling.
The idea is to study the space 
$$\H=\C^2\t \H_e\simeq \H_e\oplus \H_e,$$ 
describing a qubit system coupled to an environment $\H_e$. 
Suppose our Hamiltonian, expressed in the decomposition $\H_e\oplus \H_e$ of $\H$, is of the form $A\oplus B$, i.e.\
\begin{equation}
\label{eq:AB}
H=\begin{pmatrix}
A & 0\\
0 & B
\end{pmatrix}
\end{equation}
on $\D(A)\oplus\D(B)$, with both $A,B$ self-adjoint. The standard decoupling set for $\C^2$ consists of the Pauli group: (multiples by $1,\rmi,-1,-\rmi$) of the four Pauli matrices $\unit,X,Y,Z$. Now if we take the Pauli matrix
\[
X=\begin{pmatrix}
0 & 1\\
1 & 0
\end{pmatrix}
\]
then
\[
H=\begin{pmatrix}
A & 0\\
0 & B
\end{pmatrix}, \quad
X H X^* = \begin{pmatrix}
B & 0\\
0 & A
\end{pmatrix}.
\]
The adjoint action of other Pauli matrices to $H$ produces one of these two matrices, so we can reduce our situation down to a group with two elements $V=\{\unit,X\}$. As decoupling cycles in the examples here we consider simply $(\unit,X)$ although an analogous reasoning holds for any other cycle in $V$. Thus though we prove everything only for this specific cycle, one can actually show that decoupling works uniformly in all of the following examples.
\hfill\boxy
\end{example}

\begin{example}[Shallow-pocket model]\label{ex:sp}
See~\cite[Sec.3]{AHFB15}.
In the setting of the preceding Ex.~\ref{ex:qubit} with one qubit, we consider $\H_e = L^2(\R)$ and  $A=-B=q$,
the position operator, $q\xi(x)= x\xi(x)$, with $\D(q)= \{\xi \in L^2(\R): q \xi \in L^2(\R) \}$, in~(\ref{eq:AB}):
\begin{equation}
\label{eq:SP}
H=\begin{pmatrix}
q & 0\\
0 & -q
\end{pmatrix}.
\end{equation}
Thm.~\ref{th:suff} applies, and the model can be dynamically decoupled. In fact we get that $XHX=-H$, so the Trotter limit is trivial and decoupling works uniformly and perfectly at all time scales.

We can study the reduced dynamics as well. Let us assume the environment initial state is 
\begin{equation}
\xi_C(x) = \left(\frac{2}{\pi} \frac{1}{x^2 +4}  \right)^{1/2}, \quad x\in\R.
\label{eq:Cauchy}
\end{equation}
The  spectrum of $H$ is the full line $\R$. The state $\xi_C$ does not belong to the domain $D(q)$, and then any initial factorized state $\psi \otimes \xi_C$ does not belong to the domain of $H$.

The dynamical map $\Lambda_t$ in~(\ref{eq:dynamical_map}) has the semigroup property and its generator is the dephasing GKLS operator
\[
\L \rho = -[Z,[Z, \rho]],
\]
for all system density matrices $\rho$, where 
\[
Z=\begin{pmatrix}
1 & 0\\
0 & -1 
\end{pmatrix} ,
\]
giving rise to exponential decay of the coherences. See Fig.~\ref{fig:Exponential-and-non-exponential}. \hfill\boxy
\end{example}

\begin{example}[$q\oplus p$  example]\label{ex:x-p}
Again in the setting of Ex.~\ref{ex:qubit}, we choose $\H_e = L^2(\R)$ and  $A= q$ and $B=p$,
the position and the momentum operator, respectively,  in~(\ref{eq:AB}):
\begin{equation}
\label{eq:qp}
H=\begin{pmatrix}
q & 0\\
0 & p
\end{pmatrix},
\end{equation}
where $q= M_x$ and $p=-\rmi \rmd/\rmd x$ are self-adjoint on their natural domains $\D(q)= \{\xi \in L^2(\R): q \xi \in L^2(\R) \}$, $\D(p)=H^1(\R)$, the first Sobolev space. The sum of the two is essentially self-adjoint, e.g. on the Schwartz space $\mathcal{S}(\R)$. According to Ex.~\ref{ex:qubit}, it is sufficient to consider the group $V=\{\unit,X\}$, and
\[
H+XHX^* = \unit_{\C^2} \otimes (q+p)
\]
which is essentially self-adjoint. According to Thm.~\ref{th:suff} dynamical decoupling works uniformly.

We can study the reduced dynamics as follows. As environment initial state let us consider 
\[
\xi = \rme^{-\rmi \frac{\pi}{8} (q^2+p^2)} \xi_C,
\]
with $\xi_C$ as in~\eqref{eq:Cauchy}.
Then the dynamical map $\Lambda_t$ in~(\ref{eq:dynamical_map})  is generated by the dephasing GKLS operator plus a time dependent Hamiltonian
\[
\L_t \rho = -[Z,[Z, \rho]] - \rmi \frac{t}{2} [Z, \rho],
\]
giving rise to exponential decay of the coherences. 

In order to determine the unitary evolution after $n$ decoupling cycles and the decoupling error explicitly, we need some prerequisites. Consider the 3-dimensional real Lie algebra $\lg= \spa_{\R} \{E, P, Q\}$ with commutation relations
\[
[E,P]=0,\quad [E,Q] = 0,\quad [P,Q] = E,
\]
and its representation $\pi$ by unbounded skew-symmetric operators defined by linear continuation of
\[
E \mapsto \rmi \unit, \quad P\mapsto \rmi p, \quad Q\mapsto \rmi q
\]
where all operators here act on the Schwartz space $\Sch(\R)\subset L^2(\R)$ as common invariant domain. Notice that $L^2(\R)$ with Hamiltonian $p^2+q^2$ is a representation of the harmonic oscillator. The subspace $\A$ of finite energy vectors (i.e., the span of eigenvectors of $p^2+q^2$) forms joint analytic vectors for $p$ and $q$ because all monomials have an ``energy bound" in terms of $p^2+q^2$, i.e., $\|X_1\cdots X_m \xi_n\| \le \|(\unit+p^2+q^2)^{m/2}\xi_n\|=n^{m/2}\|\xi_n\|$, for $n$ the ``energy" of the eigenvector $\xi_n\in\A$ and all monomials with $X_i\in\{\unit,p,q\}$. Moreover, $\A$ is dense in $L^2(\R)$, and the conditions in Nelson's criterion \cite[Lem.9.1]{Nel} are fulfilled. Therefore the representation $\pi$ exponentiates to the Lie group $G$ of $\lg$ such that
\[
\rme^{\pi(X)} = \pi(\exp(X)),
\]
where $\exp$ denotes the exponential map of $G$. This means that the Baker-Campbell-Hausdorff formula holds in the representation $\pi$ as well, namely
\begin{align*}
 \rme ^{\rmi q} \rme ^{\rmi p} =& \pi(\exp(Q)) \pi(\exp(P))= \pi(\exp(Q)\exp(P)) = \pi(\exp(Q+P + \frac12 [Q,P])) \\
 =& \pi(\exp(Q+P-\frac12 E)) =  \rme^ {\rmi (q+p)- \frac{\rmi}{2} };
\end{align*}
all higher order commutators in $\lg$ vanish.

We apply this now to dynamical decoupling. The time evolution after time $t$ with $n$ decoupling cycles of length $N=|V|=2$ then reads
\begin{equation}
U_n (t) = \Big( \rme ^{\rmi \frac{t}{2 n} H}\rme ^{\rmi \frac{t}{2 n} XHX}\Big)^n =\exp \left( - \rmi  \frac{t^2}{8n} Z  \right) \otimes \exp\left(\rmi \frac{t}{2}  (q+p)\right) . 
\end{equation} 
Interestingly the decoupling error---the deviation from Eq. (\ref{eq:decworks})---is a unitary on the system only, and the convergence is uniform for $t$ in compact intervals in $\R$:
\[
\|U_n(t) - \unit_{\H_s}\t \rme^{\rmi B t}\| = \|  \rme ^{ -\rmi  \frac{t^2}{2n} Z} - \unit_{\H_s}\| \to 0,
\]
as $n \to +\infty$, with $B=\frac12 (q+p)$.
\hfill\boxy
\end{example}

\begin{example}[$q\oplus p^2$ example] \label{ex:x-p2} 
Again in the setting of Ex.~\ref{ex:qubit}, we now choose $\H_e = L^2(\R)$ and  $A= q$ and $B=p^2$, in~(\ref{eq:AB}):
\begin{equation}
\label{eq:qp2}
H=\begin{pmatrix}
q & 0\\
0 & p^2
\end{pmatrix}.
\end{equation}
The operators $q=M_x$ and $p^2=-\rmd^2/\rmd x ^2$ are self-adjoint on their natural domains $\D(q)= \{\xi \in L^2(\R): q \xi \in L^2(\R) \}$, $\D(p^2)=H^2(\R)$, the second Sobolev space. The sum of the two is essentially self-adjoint on the Schwartz space $\mathcal{S}(\R)$. According to Ex.~\ref{ex:qubit}, it is sufficient to consider $V=\{\unit,X\}$, and
\[
H+XHX^* = \unit_{\C^2} \otimes (q+p^2)
\]
which is essentially self-adjoint. According to Thm.~\ref{th:suff} dynamical decoupling works uniformly.

In order to study the decoupling error, let us consider a Cauchy distribution in momentum space 
\[
\xi(p) =\left(\frac{\gamma}{2\pi}\frac{1}{p^{2}+\frac{\gamma^2}{4}} \right)^{1/2}, \quad p\in\R,
\]
as environment initial state. Then for the qubit state $\rho_{s}(t)$ at time $t$, the decoupling error becomes
\[
\epsilon(t)=\Vert \rho_{s}(t)-\rho_{s}(0)\Vert_{2}^{2},	
\]
as a function of the decoupling steps $n$ with $\Vert \cdot \Vert_{2}$ being the Hilbert Schmidt norm and we assume that the qubit is initially prepared in $\rho_{s}(0)=|+\rangle\langle+|$, where $|+\rangle= (\frac{1}{\sqrt{2}},\frac{1}{\sqrt{2}})$ yielding $\epsilon(t)=2(1-\langle+|\rho_{s}(t)|+\rangle)$. 

We proceed in analogy with the previous example, verifying Nelson's criterion for the Lie algebra $\lg=\spa_\R \{E,P,Q,R\}$ with commutation relations
\[
[E,P]=[E,R]=[E,Q]=0,\quad [P,R]=0,\quad [P,Q] = E, \quad [R,Q]= 2P.
\]
We represent $\lg$ by unbounded skew-symmetric operators
\[
E \mapsto \rmi \unit, \quad P\mapsto \rmi p,  \quad Q\mapsto \rmi q, \quad R\mapsto \rmi p^2
\]
on $\Sch(\R)$. All monomials in $p,p^2,q$ can again be energy-bounded in terms of $p^2+q^2$ on $\A$. Thus the Lie algebra representation exponentiates to a Lie group representation again and the Baker-Campbell-Hausdorff and the Zassenhaus formula hold. Since nested commutator expressions vanish after depth $3$, the Baker-Campbell-Hausdorff and the Zassenhaus formula show that the unitary evolution after $n$ decoupling cycles $U_{n}(t)$ takes the form
\[
U_{n}(t)=\rme^{-\rmi \frac{t}{2}(\unit\otimes (q+p^2)+t^{2}/(24n^{2}))}  \rme^{-\rmi \frac{t^{3}}{16n}(Z\otimes \unit)}  \rme^{\rmi \frac{t^{2}}{4n}(Z\otimes p)}.
\]
 Since this evolution, for finite $n$, leads to dephasing in $Z$ direction of the qubit, we remark here that the choice for $\rho_{s}(0)$ as above describes the worst-case scenario, i.e. the supremum of $\epsilon$ over all initial states of the qubit. After tracing out the environmental degrees of freedom we obtain
\begin{equation}\label{eq:explicitformdece}
\epsilon(t)=1-\cos(t^{3}/(16n))\rme^{\frac{-t^{2}}{4n}\gamma},	
\end{equation}
which vanishes for $n\to\infty$. We thus have found the explicit form of the decoupling error for the $q\oplus p^2$ model, which is plotted as a function of $t$ and $n$ in Fig.~\ref{fig:decerror} for a fixed $\gamma=1$. 
\begin{figure}[!h]
\includegraphics[width=0.45\columnwidth]{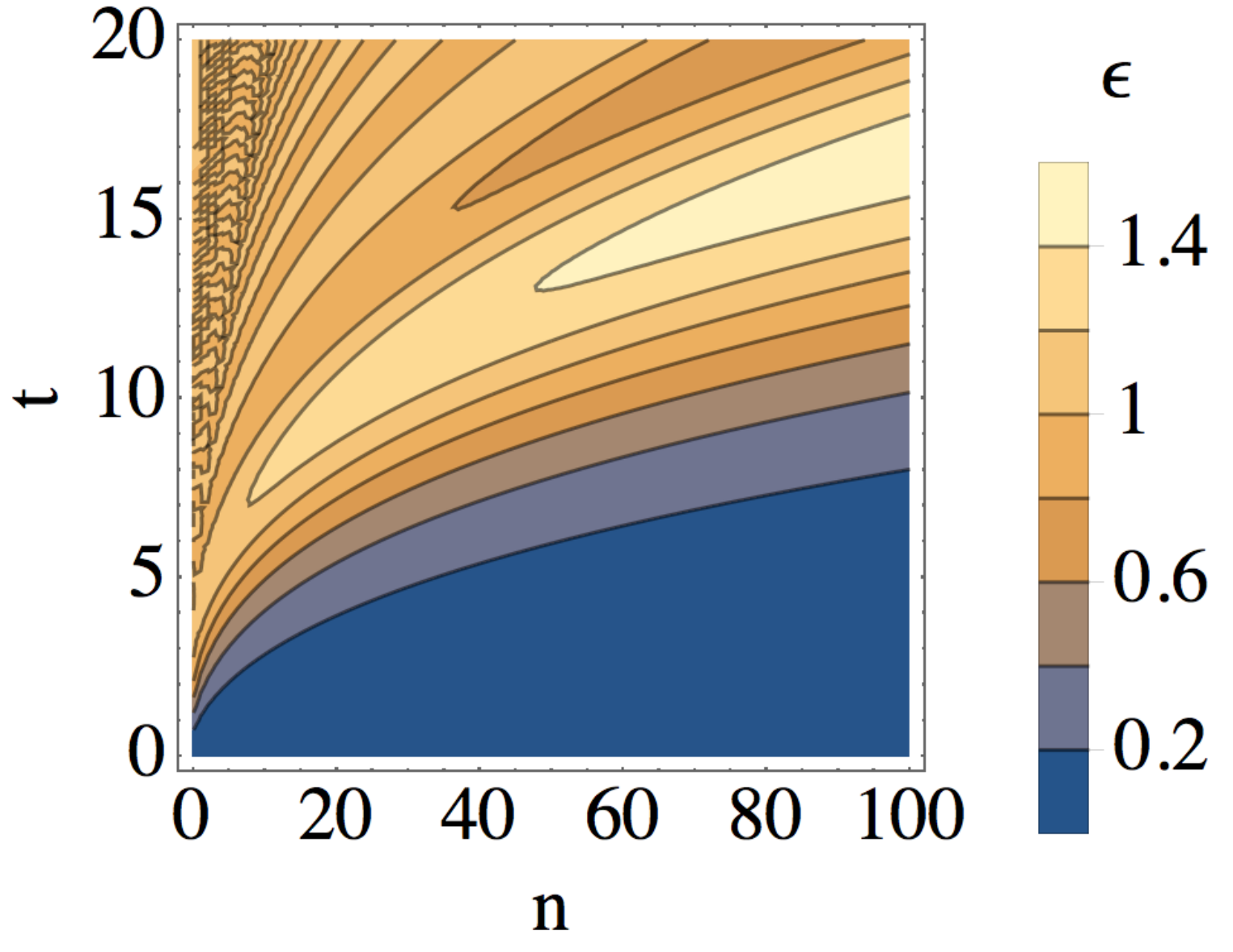}
\caption{\label{fig:decerror} Decoupling error~\eqref{eq:explicitformdece} for the $q\oplus p^2$ model given by~\eqref{eq:qp2} as a function of the total evolution time $t$ and the decoupling steps $n$ for a fixed $\gamma=1$.}	
\end{figure}
 The form of the decoupling error~\eqref{eq:explicitformdece} also shows the dependency of the initial state of the environment on the efficiency of dynamical decoupling. 
For fixed $t>0$ and $n<\infty$ we can always find an environment initial state, i.e. some $\gamma>0$,  such that decoupling becomes arbitrarily bad.   
\end{example}

\begin{example}[$q^2\oplus p^2$ example]\label{ex:x2-p2}
We choose $\H_e = L^2(\R)$ and  $A= q^2$ and $B=p^2$, in~(\ref{eq:AB}):
\begin{equation}
\label{eq:q2p2}
H=\begin{pmatrix}
q^2 & 0\\
0 & p^2
\end{pmatrix}
\end{equation}
$q^2$ and $p^2$ are self-adjoint on their natural domains $\D(q^2)= \{\xi \in L^2(\R), q^2 \xi \in L^2(\R) \}$, $\D(p^2)=H^2(\R)$, the second Sobolev space. The sum of the two is essentially self-adjoint on the Schwartz space $\mathcal{S}(\R)$.
Thm.~\ref{th:suff} applies, and the model can be dynamically decoupled. 

The unitary evolution after $n$ decoupling steps can be obtained using a symplectic representation \cite{symplectic}, and reads for $n>t$ 

\begin{equation}
U_n (t)  =  \exp\left(\rmi t f\left(\frac{t}{n}\right) \Big(\overline{\unit\otimes  (q^2+p^2) + \frac{t}{2n} X\otimes (q p+p q)}\Big)\right) ,
\end{equation} 
where
\[
f(t) = \frac{2}{t \sqrt{4-t^2}} \arctan\left(\frac{t\sqrt{4-t^2}}{2-t^2}\right).
\]
Notice that $f(t) = 1 +t^2/6 + O(t^4)$ for $t\to0$.

In this case the decoupling error (on the system) is not unitary, and the convergence is non-uniform. Interestingly, the original Hamiltonian has an absolutely continuous spectrum on the positive real line, while the limit is the Harmonic oscillator, which has a purely point spectrum.

\hfill\boxy
\end{example}
\begin{example}[Spin-boson model]\label{ex:CL}
Consider again $\H_s=\C^2$ and $\H_e = L^2(\R)$ as in the preceding examples, but now
$$
H= \overline{\omega _{c}  \unit\otimes  a^\dagger a + \frac{\omega_{a}}{2} Z \otimes \unit 
+ \frac{\Omega}{2} \left(\sigma_{+}  \otimes a + \sigma_{-}  \otimes a^\dagger \right)}.
$$
Here $a=\frac{1}{\sqrt{2}}(q+\rmi p)$ and $a^\dagger=\frac{1}{\sqrt{2}}(q-\rmi p)$ (the formal adjoint of $a$) are the harmonic oscillator ladder operators on the common invariant domain $\Sch(\R)$; $\omega_c$, $\omega_a$ and $\Omega$ are real constants,
$$ 
\sigma_+ = \begin{pmatrix}
0 & 1\\
0 & 0 
\end{pmatrix} , \qquad
\sigma_- = \begin{pmatrix}
0 & 0\\
1 & 0 
\end{pmatrix} ,
$$
and $\overline{X}$ denotes the closure of a closable operator $X$ on $\C^2\otimes L^2(\R)$, so that $H$ is self-adjoint.
This way the interaction part of the Hamiltonian is not block-diagonal, in contrast to Ex.~\ref{ex:qubit}.
This model can be decoupled using the full Pauli group $V$ because the sum $\sum_{v\in V} v H v^*$ is essentially self-adjoint on  the Schwartz space $\mathcal{S}(\R)$ which is contained in the domain intersection. 
This example can be easily generalized to a countable number of bosonic modes.
\hfill\boxy
\end{example}

One could argue whether the conditions of Thm.~\ref{th:suff} are also necessary. This is not the case, as shown in the following slightly artificial example by Chernoff~\cite{Che}. Later we will give a necessary condition for semibounded Hamiltonians.

\begin{example}[Non-overlapping domains]
\label{rm:cher}
We use again the setting Ex.~\ref{ex:qubit}. In~(\ref{eq:AB}), let $\H_e = L^2(\R)$ and consider $A=p,$ the momentum operator, 
which is self-adjoint  on $\D(p)= H^1(\R)$, the first Sobolev space. Let $B$ be the multiplication operator 
$$B \psi (x) = f(x) \psi(x), \qquad \D(B) = \{ \psi\in L^2(\R), f \psi \in L^2(\R)\},$$
where $f\in L^1_{\mathrm{loc}}(\R) \setminus L^2_{\mathrm{loc}}(\R)$ is locally integrable, but is not locally square-integrable.
For example,  we can take
\[
f(x) =\sum_{n=1}^\infty \frac{1}{n!} |x- r_n|^{-1/2},
\]
where $(r_n)_{n\in\N}$ is some enumeration of the rationals. One can prove that~\cite[Prop.~5.1]{Che}
$$
\slim_{n\to+\infty} \left(\rme^{\rmi \frac{t}{n} p}  \rme^{\rmi \frac{t}{n} B} \right)^n = \rme^{-\rmi C}  \rme^{\rmi t p}  \rme^{\rmi C}, 
$$
where $C$ is the multiplication operator 
$$C \psi (x) = F(x) \psi(x), \qquad \D(C) = \{ \psi\in L^2(\R), F \psi \in L^2(\R)\},$$
by an antiderivative of $f$,
$$F(x) = \int_0^x f(y) \rmd y.$$

The unitary evolution after $n$ decoupling cycles $(\unit,X)$ converges 
\begin{equation}
\slim_{n\to+\infty} U_n (t) = \slim_{n\to+\infty} \Big( \rme ^{\rmi \frac{t}{2n} H}\rme ^{\rmi \frac{t}{2n} XHX}\Big)^n =\unit_{\H_s} \otimes \exp \left(\rmi t\rme^{-\rmi C}  p  \rme^{\rmi C} \right) ,
\end{equation} 
and therefore decoupling works specifically, and a similar reasoning applies to any decoupling cycle, so dynamical decoupling works uniformly.

Notice that if $\psi \in C(\R)$ is continuous, and thus locally bounded, then $f \psi \notin L^2_{\mathrm{loc}}(\R)$. Therefore, $\D(B)$ does not contain any nonzero continuous function, and thus 
$$
\D(A)\cap \D(B) = \{0\},
$$
since $H^1(\R) \subset C(\R)$. 
Thus the Trotter formula for unitary groups can converge  when the operator sum is not essentially self-adjoint, and even in the extreme case of a trivial domain intersection. Dynamical decoupling works even though $\D= \bigcap_{v\in V} v \D(H)$=\{0\}.

\qed
\end{example}

\section{A necessary condition for dynamical decoupling of non-negative Hamiltonians}\label{dec:nec}

In Thm.~\ref{th:suff} we have established a sufficient condition for dynamical decoupling to work uniformly. However, Ex.~\ref{rm:cher} showed that it is not at all necessary, so let us now turn to our promised necessary condition, under the additional assumption of a non-negative Hamiltonian:

\begin{theorem}\label{th:nec}
Let $V$ be a decoupling set for $\H_s$, and suppose that $H$ is non-negative. If dynamical decoupling works uniformly then for all $v,w\in V$, the form domain intersections
\[
v \D(H^{1/2}) \cap w \D(H^{1/2}) \subset \H
\]
must be dense.  
\end{theorem}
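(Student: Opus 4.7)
The plan is to combine the uniform decoupling hypothesis (which controls the Trotter product on the imaginary axis) with Kato's Trotter product formula for contraction semigroups of non-negative self-adjoint operators (which controls it on the positive real axis), and to match the two boundary behaviours via Cauchy's theorem. Since the full intersection $\bigcap_{v\in V} v\D(H^{1/2})$ sits inside every pairwise intersection $v\D(H^{1/2})\cap w\D(H^{1/2})$, it is enough to prove density of the full intersection.

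Fix a decoupling cycle $(v_1,\ldots,v_N)$ that visits each element of $V$ and set $A_k:=v_kHv_k^*\geq 0$. For $\operatorname{Re}z\ge 0$ define
\[
\Phi_n(z):=\Big(\prod_{k=1}^N \rme^{-zA_k/(nN)}\Big)^n.
\]
Each factor $\rme^{-zA_k/(nN)}$ is a contraction, strongly continuous on $\{\operatorname{Re}z\ge 0\}$ and strongly holomorphic in the interior; hence so is $\Phi_n$, with $\|\Phi_n(z)\|\leq 1$. On the imaginary axis, $\Phi_n(-it)$ is exactly the decoupled Trotter product, so by uniform decoupling $\Phi_n(-it)\to U(t):=\unit_{\H_s}\otimes \rme^{itB}$ strongly, uniformly on compact intervals. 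On the positive real axis, Kato's Trotter product formula for semigroups of non-negative self-adjoint operators (extended to $N$ factors) gives $\Phi_n(s)\to \rme^{-sQ}P$ strongly for $s>0$, where $Q$ is the non-negative self-adjoint operator on $\overline{\D}$ associated with the form sum $\frac{1}{N}\sum_k q_{A_k}$ with form domain $\D:=\bigcap_{k}v_k\D(H^{1/2})$, and $P$ is the orthogonal projection onto $\overline{\D}$.

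To match the two boundary limits I would run a Cauchy-theorem argument. For $\lambda=\mu+i\nu$ with $\mu,\nu>0$ and any $\xi\in\H$, integrate $z\mapsto \rme^{-\lambda z}\Phi_n(z)\xi$ around the rectangle with vertices $0,R,R-iR,-iR$. The two horizontal sides vanish as $R\to\infty$ thanks to $\|\Phi_n\|\leq 1$ and the decay of $\rme^{-\lambda z}$ in the positive-real and negative-imaginary directions (guaranteed by $\mu,\nu>0$), leaving
\[
\int_0^\infty \rme^{-\lambda s}\Phi_n(s)\xi\,\rmd s\;=\;-i\int_0^\infty \rme^{i\lambda t}\Phi_n(-it)\xi\,\rmd t.
\]
Letting $n\to\infty$ by dominated convergence, the left side tends to $(\lambda I+Q)^{-1}P\xi$ and the right side, via the spectral theorem for $U$, to $(\lambda I+\unit_{\H_s}\otimes B)^{-1}\xi$, yielding the identity
\[
(\lambda I+Q)^{-1}P\xi\;=\;(\lambda I+\unit_{\H_s}\otimes B)^{-1}\xi.
\]
Evaluating on any $\xi\in P^\perp \H$ makes the left-hand side vanish, while the right-hand resolvent is injective, forcing $\xi=0$. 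Hence $P=I$, so $\D$ is dense in $\H$, and every pairwise intersection $v\D(H^{1/2})\cap w\D(H^{1/2})\supseteq \D$ is dense.

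The main obstacle is the $N$-factor Kato semigroup product formula in its form-sum formulation: identifying the limit $\rme^{-sQ}P$ together with the projection $P$ onto the closure of the form-domain intersection is classical for two factors (Kato 1978), and the $N$-factor version needed here follows either by induction on the number of factors or by a direct multi-factor proof. A secondary but routine technical point is the vanishing of the two horizontal sides of the contour, which uses only the uniform bound $\|\Phi_n\|\le 1$ and the strict positivity of both $\mu$ and $\nu$.
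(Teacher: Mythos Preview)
Your argument attempts to prove more than the theorem states: you aim for density of the \emph{full} intersection $\bigcap_{v\in V} v\D(H^{1/2})$, which is precisely Conjecture~\ref{conj:multiprod} in the paper, not Theorem~\ref{th:nec}. The step you need for this is the $N$-factor Kato--Trotter formula in its form-sum formulation, i.e.\ $\Phi_n(s)\to \rme^{-sQ}P$ strongly for $s>0$ with $P$ the projection onto $\overline{\bigcap_k v_k\D(H^{1/2})}$. You assert this ``follows either by induction on the number of factors or by a direct multi-factor proof,'' but neither route is available. Induction fails because $\rme^{-tA_1/n}\rme^{-tA_2/n}$ is not of the form $\rme^{-tC/n}$ for a self-adjoint $C$, so one cannot feed it back into the two-factor statement; and the paper explicitly flags the multi-factor version as an open problem (see the paragraph introducing Conjecture~\ref{conj:multiprod} and Remark~\ref{rem:var}, where the best known multi-factor result, Kato--Masuda, gives only subsequential convergence for almost every $t$). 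So the crucial identification of the semigroup limit with $\rme^{-sQ}P$ is unproven, and without it your resolvent identity $(\lambda+Q)^{-1}P\xi=(\lambda+\unit\otimes B)^{-1}\xi$ cannot be established.

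The paper circumvents this obstacle rather than solving it. Given $v_1,v_2$ with non-dense pairwise intersection, it builds a \emph{palindromic} cycle $(v_1,v_2,\ldots,v_{|V|},v_{|V|},\ldots,v_2,v_1)$ so that the full product $G_t$ is sandwiched as $0\le G_t\le F'_{t/N}$ with $F'_t=\rme^{-tv_1Hv_1^*}\rme^{-2tv_2Hv_2^*}\rme^{-tv_1Hv_1^*}$, which involves only the two chosen operators. Operator monotonicity of $-1/x$ then propagates this inequality through the iteration, and Kato's genuine \emph{two}-factor result applied to $F'$ forces $G_{t/n}^n\xi\to 0$ for $\xi$ orthogonal to the pairwise form-domain closure. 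This is why the theorem yields only pairwise density: the reduction to two factors is what makes the argument rigorous. Your contour-integral matching of the real and imaginary boundary limits is a reasonable alternative to the paper's use of Chernoff's analytic-continuation theorem, but it does not rescue the argument once the $N$-factor input is missing.
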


Before starting the proof, let us quickly recall something about form domains. Every densely defined operator $A$ on $\H$ gives rise to a bilinear form $\D\times\D\ra\C$ with some form domain $\D\subset\H$, in general not unique. In the case $A$ is non-negative, this form domain is defined as $\D(A^{1/2})$. Notice that $\D(A^{1/2})\supset \D(A)$, so the form domain of (the bilinear form of) a non-negative operator is always dense. Given two non-negative operators, which might have trivial domain intersection and therefore no sum but whose form domains intersect densely, it is possible to define a sum of the two forms; this new bilinear form corresponds to a new self-adjoint operator which is generally called the \emph{form sum} of the two initial operators. For a proper introduction to form domains we refer the reader to \cite[Sec.8.6]{Reed} or \cite[Sec.10.3]{JL}.

\begin{proof}
Suppose that dynamical decoupling works but not all of the form domain intersections are dense. Choose $v_1,v_2\in V$ such that for these two elements,
\[
\H_0 := \overline{v_1 \D(H^{1/2}) \cap v_2 \D(H^{1/2})} \not= \H .
\]
Choose $\xi \in \H_0^\perp$ with $\|\xi\|=1$. Extend the two elements $v_1,v_2$ to a palindromic cycle of length $N=2|V|$ in $V$, say $(v_1, v_2, \ldots, v_{|V|}, v_{|V|}, \ldots v_2,v_1)$. We can then define the following continuous functions
\[
f_n: \overline{\C_+} \ra \H, \quad f_n(z)= \Big(\rme ^{-\frac{z}{n N} v_1Hv_1^*} \cdots \rme ^{-\frac{z}{n N} v_{|V|} H v_{|V|}^*} \rme ^{-\frac{z}{nN} v_{|V|}Hv_{|V|}^*} \cdots \rme ^{-\frac{z}{nN} v_1Hv_1^*}\Big)^n \xi
\]
which are analytic on $\C_+$, for every $n\in\N$; here $\C_+=\{z\in\C: \Re(z) > 0\}$ denotes the open complex right half-plane. Since we assumed dynamical decoupling to work uniformly, we know from~\eqref{eq:decworks} that $f_n$ converges on the boundary and there is a selfadjoint $B$ on $\H_e$ such that
\[
f_n(-\rmi t) \ra  \rme^{\rmi t (\unit_{\H_s}\t B)} \xi,
\]
as $n\ra\infty$, uniformly for $t$ in compact intervals in $\R$. Moreover,~\cite[Thm.~7.2]{Che} shows that, since $vHv^*$ is non-negative for every $v\in V$, $B$ is non-negative as well and 
\begin{equation}\label{eq:formproof2}
f_n(z)\ra f(z):=\rme^{-z (\unit_{\H_s}\t B)}\xi, \quad n\ra\infty,
\end{equation}
uniformly for $z$ in compact subsets of $\C_+$; and $f$ is continuous on $\overline{\C_+}$ and analytic on $\C_+$. It is obvious that
\begin{equation}\label{eq:formproof0}
f(0)=\xi\not= 0. 
\end{equation}

We now claim that
\begin{equation}\label{eq:formproof1}
f_n(t) 
 \ra 0
\end{equation}
as $n\ra\infty$, uniformly for $t$ in compact intervals in $(0,\infty)$.
To this end we make use of the proof in~\cite{Kato}. Following the notation there, let us write
\[
F_t'= \rme ^{- t v_1Hv_1^*} \rme ^{- t v_{2}Hv_{2}^*} \rme ^{- t v_{2}Hv_{2}^*} \rme ^{- t v_1Hv_1^*}, \quad t\in[0,\infty).
\]
This is precisely the quantity defined in~\cite[(3.6)]{Kato}. Moreover, let us define
\[
G_t= \rme ^{-\frac{t}{N} v_1Hv_1^*} \cdots \rme ^{-\frac{t}{N} v_{|V|}Hv_{|V|}^*} \rme ^{-\frac{t}{N} v_{|V|}Hv_{|V|}^*} \cdots \rme ^{-\frac{t}{N} v_1Hv_1^*}, \quad t\in[0,\infty).
\]
Then it follows that $0\le G_t\le F_{t/N}'\le \unit$, for all $t\in [0,\infty)$, and
\[
0\le G_t  = \unit - (\unit-G_t) \le (\unit + (\unit-G_t))^{-1}.
\]
We are interested in the limit of $G_{t/n}^n\xi$. We have
\[
0\le G_{t/n}^{2n} \le (\unit + (\unit-G_t))^{-2n} \le (\unit + 2n (\unit-G_t))^{-1}.
\]
Using the fact that $x\mapsto -\frac{1}{x}$ is operator-monotone on $(0,\infty)$~\cite{HOT} and the fact that $0\le G_t\le F_{t/N}'\le \unit$, so $\unit \le \unit + 2n (\unit-F_{t/N}') \le \unit + 2n (\unit-G_t)$, we get
\[
(\unit + 2n (\unit-G_t))^{-1} \le  (\unit + 2n (\unit-F_{t/N}'))^{-1}.
\]
Then it follows from~\cite[(3.11)]{Kato}
\[
\|G_{t/n}^n\xi\|^2 = \langle \xi, G_{t/n}^{2n}\xi \rangle \le \langle \xi, (\unit + 2n (\unit-F_{t/N}'))^{-1} \xi \rangle \ra 0, \quad n\ra\infty,
\]
uniformly for $t$ in compact intervals in $(0,\infty)$. This proves our claim in~\eqref{eq:formproof1}, namely $f_n(t)\ra 0$, uniformly for $t$ in compact intervals of $(0,\infty)$.

On the other hand,~\eqref{eq:formproof2} shows that $f_n(t)\ra f(t)$ as $n\ra\infty$, which means that $f(t)=0$, for $t\in(0,\infty)$. By the identity theorem for analytic functions, we get that $f(z)=0$, for all $z\in\C_+$, and since $f$ is continuous on $\overline{\C_+}$, we must have $f(0)=0$ as well. This is in contradiction with~\eqref{eq:formproof0}. Thus dynamical decoupling cannot work uniformly if the form domain intersections are not dense.
\end{proof}

The preceding theorem provides a necessary condition for dynamical decoupling to work uniformly, namely that the form domain intersections $v \D(H^{1/2}) \cap w \D(H^{1/2})$ are dense in $\H$, for every two $v,w\in V$. We believe that it should be possible to strengthen this as follows, though in order to prove this we would require a generalisation of~\cite{Kato} to Trotter products of arbitrarily many semigroups rather than only two, which is currently an open problem.

\begin{conjecture}\label{conj:multiprod}
Let $V$ be a decoupling set for $\H_s$, and suppose that $H$ is non-negative. If dynamical decoupling works uniformly then the total form domain intersection
\[
\bigcap_{v\in V} v \D(H^{1/2})  \subset \H
\]
must be dense.  
\end{conjecture}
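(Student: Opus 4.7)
The plan is to mimic the proof of Thm.~\ref{th:nec} with the two distinguished elements $v_1,v_2\in V$ replaced by the entire decoupling set, and with Kato's two-semigroup symmetric product formula replaced by a conjectural multi-factor analogue. Assume for contradiction that dynamical decoupling works uniformly but that the total form-domain intersection $\bigcap_{v\in V} v\D(H^{1/2})$ is not dense in $\H$, and choose a unit vector $\xi\in\H$ orthogonal to its closure. Listing $V=\{w_1,\ldots,w_{|V|}\}$, I would use the palindromic decoupling cycle of length $N=2|V|$ given by $(w_1,\ldots,w_{|V|},w_{|V|},\ldots,w_1)$ — a valid cycle, since each element appears exactly twice — and, for every $n\in\N$, define
\[
f_n:\overline{\C_+}\ra \H, \quad f_n(z):=\Big(\rme^{-\tfrac{z}{nN} w_1 H w_1^*}\cdots \rme^{-\tfrac{z}{nN} w_{|V|} H w_{|V|}^*}\rme^{-\tfrac{z}{nN} w_{|V|} H w_{|V|}^*}\cdots \rme^{-\tfrac{z}{nN} w_1 H w_1^*}\Big)^n\xi,
\]
which is analytic on $\C_+$ and continuous on $\overline{\C_+}$.

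The convergence on the imaginary axis is handled exactly as in the proof of Thm.~\ref{th:nec}: uniform decoupling together with~\cite[Thm.~7.2]{Che} yields a non-negative self-adjoint $B$ on $\H_e$ for which $f_n(z)\to f(z):=\rme^{-z(\unit_{\H_s}\t B)}\xi$ uniformly on compact subsets of $\overline{\C_+}$, with $f$ analytic on $\C_+$, continuous on $\overline{\C_+}$, and $f(0)=\xi\ne 0$. The remaining step is to establish $f_n(t)\to 0$ uniformly on compact subsets of the positive real axis; once this is in place, the identity theorem for analytic functions forces $f\equiv 0$ on $\C_+$ and hence, by continuity, at $z=0$, contradicting $f(0)\ne 0$.

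The obstacle — and the reason the statement is only a conjecture — is precisely the vanishing of $f_n(t)$ on $(0,\infty)$. In the two-factor case of Thm.~\ref{th:nec} one exploits that the symmetric product $F_t'$ built out of only $(v_1,v_2)$ dominates the full palindromic product $G_t$, so that operator monotonicity of $-1/x$ and Kato's estimate~\cite[(3.11)]{Kato} take care of everything. Here one can no longer drop factors, because one needs the form-domain intersection of all $w_iHw_i^*$ rather than of any chosen pair. What is therefore required is an extension of Kato's theorem to an arbitrary number of non-negative self-adjoint generators: for the symmetric palindromic product
\[
F_t:=\rme^{-t w_1 H w_1^*}\cdots \rme^{-t w_{|V|} H w_{|V|}^*}\rme^{-t w_{|V|} H w_{|V|}^*}\cdots \rme^{-t w_1 H w_1^*},
\]
one would need
\[
\big\langle \xi,\big(\unit+2n(\unit-F_{t/N})\big)^{-1}\xi\big\rangle \ra 0, \quad n\ra\infty,
\]
uniformly for $t$ in compact subsets of $(0,\infty)$, whenever $\xi\perp \overline{\bigcap_{i=1}^{|V|} \D((w_iHw_i^*)^{1/2})}$. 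This is the genuine analytic content that is currently missing; it would amount to identifying the strong limit of the symmetric multi-factor Trotter product with the semigroup generated by the form sum on the common form domain, and its vanishing on the orthogonal complement of that domain. Granting such a multi-semigroup Kato estimate, the argument above delivers Conj.~\ref{conj:multiprod} immediately, and I expect that any eventual proof of the conjecture will hinge on precisely this ingredient rather than on anything internal to the decoupling construction.
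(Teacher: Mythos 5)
Your proposal does not actually prove the statement, and it should not: the paper itself labels this a conjecture and, in the paragraph preceding it, explicitly attributes the obstruction to the lack of a generalisation of Kato's symmetric Trotter product theorem~\cite{Kato} to more than two semigroups, noting that this is ``currently an open problem.'' You have correctly reproduced that diagnosis: the imaginary-axis convergence, the analyticity argument, and the contradiction with $f(0)=\xi\ne 0$ all go through verbatim from the proof of Thm.~\ref{th:nec}, and the single missing ingredient is exactly the multi-factor analogue of the estimate $\langle\xi,(\unit+2n(\unit-F_{t/N}'))^{-1}\xi\rangle\to 0$ on the orthogonal complement of the common form domain. In the two-element case the proof of Thm.~\ref{th:nec} gets away with dominating the full palindromic product $G_t$ by a two-factor symmetric product $F_{t/N}'$ built from $v_1,v_2$ alone, precisely because only $v_1\D(H^{1/2})\cap v_2\D(H^{1/2})$ is assumed thin; as you observe, one cannot drop factors here since the hypothesis involves $\bigcap_{v\in V}v\D(H^{1/2})$, so no domination trick reduces to the known case. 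Your account of where the conjecture stands, and what a proof would have to supply, coincides with the paper's.
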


In the case where $V$ consists of two elements, the conjecture reduces to Thm.~\ref{th:nec}, and we can realize the relevance of the condition in the following example.

\begin{example}[Non-overlapping form domains]\label{ex:Lapidus}
Assume that in~(\ref{eq:AB}) both $A,B\ge 0$ but vanishing form domain intersection, $\D(A^{1/2})\cap \D(B^{1/2})=\{0\}$. 
Then applying the decoupling operations on $\C^2$ as in Ex.~\ref{ex:qubit} leads to
\[
\D(H^{1/2})\cap\D(vH^{1/2}v^*) = \big(\D(A^{1/2})\oplus \D(B^{1/2})\big) \cap \big(\D(A^{1/2})\oplus \D(B^{1/2})\big) = \{0\}.
\]
According to the criterion in Thm.~\ref{th:nec} this system cannot be decoupled from the  environment.

Now in order to find such operators, let us modify Ex.~\ref{rm:cher}, see also \cite[Ex.5.6]{Che} or \cite[Ex.10.3.21]{JL}. Namely, consider $\H_e= L^2(\R)$, and $A=p^2$ the negative second derivative operator $-\frac{\rmd^2}{\rmd x^2}$ on $\R$ and $B$ the multiplication with a certain positive measurable function $f$ yet to be determined. The domain of $A$ is the second Sobolev space,
$\D(A) = H^2(\R)$, 
and the form domain is the first Sobolev space, 
$\D(A^{1/2}) = H^1(\R)$.  
Instead for $B$ we find
\[
\D(B) = \{\xi\in L^2(\R): f\xi \in L^2(\R)\}
\] 
and
\[
\D(B^{1/2}) = \{\xi\in L^2(\R): \sqrt{f}\xi \in L^2(\R)\}.
\]
Now we take $f$ in such a way that it is nowhere locally integrable. E.g., we can take
\[
f(x) = \Big( \sum_{n=1}^\infty \frac{1}{n!} |x- r_n|^{-1/2} \Big)^2,
\]
where $(r_n)_{n\in\N}$ is a complete enumeration of the set of rational numbers.

With this choice, one can prove that $A$ and $B$ are densely defined self-adjoint operators on $L^2(\R)$ but with trivial form domain intersection:
\[
\D(A^{1/2})\cap D(B^{1/2})=\{0\},
\]
which concludes our example. 
\hfill\boxy
\end{example}

\begin{remark}[Some variations]\label{rem:var}
In order to allow for a wider selection of models which can be dynamically decoupled, we might try to relax the condition of dynamical decoupling in~\eqref{eq:decworks} a bit. One way forward would be to say that as we fix $t$ and let $n\ra\infty$, we no longer require~\eqref{eq:decworks} for the whole sequence but instead for a subsequence only. In other words, we could say that dynamical decoupling works if, for any $t\in [0,\infty)$,
there is a subsequence $(n_k)_{k\in\N}$
such that
\begin{equation}
\label{eq:weakDD}
\slim_{k\ra\infty} \Big( \prod_{j=1}^N \rme ^{\rmi\frac{t}{n_k N} v_j Hv_j^*}\Big) ^{n_k} = \rme^{\rmi t (\unit_{\H_s}\t B)}.
\end{equation}
Interestingly enough, it follows from \cite{KM} using the argument in \cite[Prop.11.7.4]{JL} that for any system with decoupling set $V$ and $H\geq0$, the condition of dense total form domain intersection in Conj.~\ref{conj:multiprod} is sufficient in order for~(\ref{eq:weakDD}) to hold for almost all $t$ (though \emph{not} for all $t$ and \emph{not} uniformly in compact intervals).
\end{remark}

We end this section with a difficult open problem:

\begin{problem}\label{prob2}
Let $V$ be a decoupling set for $\H_s$, and suppose that $H$ is non-negative. Is it true that dynamical decoupling works (uniformly) if and only if the total form domain intersection
\[
\bigcap_{v\in V} v \D(H^{1/2})  \subset \H
\]
is dense? 
\end{problem}

Notice that an affirmative answer would prove Conj.~\ref{conj:multiprod}.

\section{Generalized Friedrichs-Lee model}
\label{ex:Lee}

The aim here is to provide a physically realistic model where dynamical decoupling does not work.
The environment is described by the standard Friedrichs-Lee model, which is defined in detail in~\cite[Sec.4.2.2-3]{Wal} and will be quickly recalled. In this section, the Hamiltonian will no longer be non-negative, so Thm.~\ref{th:nec} becomes irrelevant here.

The Hilbert space is  $\H_e=\C \oplus L^2(\R)$. It will be convenient to use a matrix notation and write the vectors $\psi = x \oplus \xi \in \H_e$ in the form
\begin{equation}
\psi(\omega) =
\begin{pmatrix}
x \\ \xi(\omega)
 \end{pmatrix},
\label{eq:vector}
\end{equation}
where $\omega\in\R$, $x\in \C$, and $\xi \in L^2(\R)$.
Then, given a function $g\in L^2(\R)$, the Friedrichs-Lee Hamiltonian has a block matrix form defined by~\cite{Friedrichs}
\begin{equation}
\label{eq:matrixaction}
(H_g \psi) (\omega)  = 
\begin{pmatrix} 
0 & \langle g| \\ g(\omega) & \omega 
\end{pmatrix}
\begin{pmatrix}
x \\ \xi(\omega)
 \end{pmatrix} 
 =
\begin{pmatrix}   
\langle g | \xi \rangle  \\ x g(\omega)+ \omega \xi(\omega) 
\end{pmatrix},
\end{equation}
on the domain  $\D (H_g) = \C \oplus \D(q)$, where 
$\D(q)= \{\xi \in L^2(\R), q \xi \in L^2(\R) \}$ is the domain of the position operator, $(q\xi) (\omega) = \omega \xi(\omega)$.

This Hamiltonian is the restriction to the vacuum/one-particle  sector 
of the  quantum-field Hamiltonian
\begin{equation}
\label{eq:H0F}
H_{\mathrm{L}} =   \int_{\mathbb{R}}\d\omega\, \omega\, a^*_\omega a_\omega + \int_{\mathbb{R}} \d\omega  \big(g(\omega) a_\omega^* +  \overline{g(\omega)} a_\omega\big)
\end{equation}
on the symmetric Fock space $\mathcal{F}_s(L^2(\mathbb{R}))$ where $a_\omega$ and $a_{\omega}^*$ are the bosonic annihilation and creation operators~\cite{Reed}.
Indeed, $\mathcal{F}_{n\leq1}
=\mathbb{C}\oplus L^2(\mathbb{R})=\mathcal{H}_e$ and, by noting that $|\mathrm{vac}\rangle = (1,0)$ and $a^*(\xi) |\mathrm{vac}\rangle = (0,\xi)$, the restriction of~(\ref{eq:H0F}) to $\mathcal{H}_e$ gives~(\ref{eq:matrixaction}). This model, introduced by Lee~\cite{Lee} as a solvable quantum-field model for studying the renormalisation problem, describes the decay of an unstable vacuum into the one-particle sector, due to an interaction term with coupling function $g$. When the coupling becomes flat, on physical ground one expects that the decay will be purely exponential. 

While it is tempting so simply put $g=\operatorname{constant}$ in Eq.~(\ref{eq:H0F}), this does not result in a self-adjoint Hamiltonian \cite{Wal}. Instead, one can show that, given a uniformly bounded sequence of positive coupling functions $(g_n)_{n\in\N}\subset L^2(\R) \cap L^{\infty}(\R)$, with 
$$g_n(\omega) \to 1/\sqrt{2\pi}$$ 
pointwise as $n\to\infty$, there exists a self-adjoint operator $H_{\star}$ which is the limit
$$ H_{g_n} \to H_\star$$
in the strong-resolvent sense, and thus~\cite{deOliveira}
$$\rme^{\rmi t H_{g_n}} \ra  \rme^{\rmi t H_\star}, \quad n\ra\infty,$$
strongly, for each $t\in\R$. 

One can explicitly write the unitary time evolution under the limit Hamiltonian $H_\star$.
It is convenient to consider the Fourier transform on the second component of~(\ref{eq:vector}), which leads us  to
\begin{equation*}
\begin{pmatrix}
x\\ \xi
\end{pmatrix}
\mapsto
\rme^{-\rmi t H_\star}
\begin{pmatrix}
x\\ \xi
\end{pmatrix}
=
\begin{pmatrix}
\rme^{-t/2} x - \rmi \rme^{-t/2} \int_0^t \rme^{s/2} \xi(s)\rmd s \\
\xi(t+\cdot) -\rmi \chi_{[-t,0]}(\cdot) \rme^{-(t+\cdot)/2} x
-\chi_{[-t,0]}(\cdot) \int_0^{t+\cdot} \rme^{-(t+\cdot)/2}\rme^{s/2} \xi(s) \rmd s 
\end{pmatrix},
\end{equation*}
for $t\ge 0$. Here $\chi_\Omega$ is the characteristic function of set $\Omega\subset\R$, i.e., $\chi_\Omega(t)=1$ if $t\in\Omega$ and $=0$ otherwise.
In particular, the vacuum state $|\mathrm{vac}\rangle = (1,0)$ will exponentially decay into a one-photon state as
\begin{equation*}
\rme^{-\rmi t H_\star}
\begin{pmatrix}
1\\ 0 
\end{pmatrix}
=
\begin{pmatrix}
\rme^{-t/2}  \\
 -\rmi \chi_{[-t,0]}(\cdot) \rme^{-(t+\cdot)/2} x 
\end{pmatrix}.
\end{equation*}

This implies that the spectrum of $H_\star$ is the whole real line. The coupling between a single qubit and the singular Friedrichs-Lee model $H_\star$ provides a reasonably realistic model of a quantum system in interaction with a Markovian environment. It is the dilation of the amplitude damping Lindbladian discussed in~\cite[Sec.~3]{AHFB15}, and it exhibits exponential decay. 
The single qubit is our system $\H_s=\C^2$ introduced in Ex.~\ref{ex:qubit}. This way the total Hilbert space is then $\H= \H_s\t\H_e$. We can write this out as $\H = \C \oplus L^2(\R) \oplus \C\oplus L^2(\R)$. This way the free time evolution $U_t$, for $t\ge 0$, of the coupled total system can be defined as
\begin{equation*}
\begin{pmatrix}
x_1\\\xi_1\\ x_2\\ \xi_2
\end{pmatrix}
\mapsto
\begin{pmatrix}
\rme^{-t/2} x_1 - \rmi \rme^{-t/2} \int_0^t \rme^{s/2} \xi_2(s)\rmd s \\
\xi_1\\
x_2\\
\xi_2(t+\cdot) -\rmi \chi_{[-t,0]}(\cdot) \rme^{-(t+\cdot)/2}x_1
-\chi_{[-t,0]}(\cdot) \int_0^{t+\cdot} \rme^{-(t+\cdot)/2}\rme^{s/2} \xi_2(s) \rmd s
\end{pmatrix}.
\end{equation*}

From this time evolution, one could now compute the Hamiltonian, following the lines of~\cite[Sec.4.2.2]{Wal}, and show that the conditions in Thm.~\ref{th:suff} are not fulfilled. But since we want to show that decoupling does not work for this model, we proceed differently and compute the evolution explicitly.

Consider as initial state the vector $(1,0)\t (1,0)$, i.e.,
\[
\begin{pmatrix}
x_1\\\xi_1\\ x_2\\ \xi_2
\end{pmatrix}
=
\begin{pmatrix}
1\\ 0 \\ 0\\ 0
\end{pmatrix},
\]
we get the free time evolution from time $0$ to $t$ as
\[
\begin{pmatrix}
1 \\0\\0\\0
\end{pmatrix}
\mapsto
\begin{pmatrix}
\rme^{-t/2}\\
0\\
0\\
-\rmi \chi_{[-t,0]}(\cdot) \rme^{-(t+\cdot)/2}
\end{pmatrix}.
\]
This shows that the the state of the qubit system decays exponentially, due to interaction with the environment. 

We would now like to show that this still happens when dynamical decoupling is applied. We choose the group generated by the four Pauli matrices as decoupling set $V$. Following~\eqref{eq:decevo}, at time $4\tau$ we find the total perturbed time evolution
\begin{equation*}
\begin{pmatrix}
x_1\\ \\ 0 \\ \\ 0 \\ \\
 \xi_2 
\end{pmatrix}
\mapsto
\begin{pmatrix}
\rme^{-\tau} x_1 - \rmi \rme^{-\tau} \int_0^{\tau} \rme^{s/2} \xi_2(s)\rmd s + \rmi \rme^{-\tau} \int_{\tau}^{2\tau} \rme^{s/2} \xi_2(s)\rmd s \\ \\
0 \\ \\
0 \\\\
\xi_2(2\tau+\cdot) -\rmi \chi_{[-2\tau,-\tau]}(\cdot) \rme^{-(2\tau+\cdot)/2}x_1
+\rmi \chi_{[-\tau,0]}(\cdot) \rme^{-(2\tau+\cdot)/2}x_1\\
-\chi_{[-2\tau,-\tau]}(\cdot) \int_0^{2\tau+\cdot} \rme^{-(2\tau+\cdot)/2}\rme^{s/2} \xi_2(s) \rmd s\\
+\chi_{[-\tau,0]}(\cdot) \int_0^{\tau}x_1 \rme^{-(2\tau+\cdot)/2}\rme^{s/2} \xi_2(s) \rmd s
-\chi_{[-\tau,0]}(\cdot) \int_{\tau}^{2\tau+\cdot} \rme^{-(2\tau+\cdot)/2}\rme^{s/2} \xi_2(s) \rmd s
\end{pmatrix}.
\end{equation*}
Iterating this cycle now $n$ times, we can compute the perturbed time evolution at time $t=4\tau n$ from this formula. However, we do not require such a level of generality since we are mainly interested in the evolution of the state $(1,0,0,0)$. A closer inspection under the assumption that the initial state is $(1,0,0,0)$ shows the following: the second and third component remain $0$ at time $t$; the fourth component evolves to a function with support on $[-2\tau n,0]$, so on the negative half-axis, so that the term
\[
- \rmi \rme^{-\tau} \int_0^{\tau} \rme^{s/2} \xi_2(s)\rmd s + \rmi \rme^{-\tau} \int_{\tau}^{2\tau} \rme^{s/2} \xi_2(s)\rmd s
\]
vanishes. Therefore,  we get
\begin{equation*}
\begin{pmatrix}
1 \\0\\0\\0
\end{pmatrix}
\mapsto
\begin{pmatrix}
\rme^{-t/4}\\
0\\
0\\
\rmi \phi_{n,t}
\end{pmatrix},
\end{equation*}
where
\begin{equation}
\phi_{n,t}(s) =  \rme^{-(t/2+s)} \sum_{k=1}^n \Big(
\chi_{[-tk/2n,-tk/2n+t/4n]}(s) - \chi_{[-tk/2n+t/4n,-t(k-1)/4n]}(s) \Big),
\label{eq:LFnth}
\end{equation}
so we still get exponential decay on the system if the initial state was $(1,0,0,0)$. We are interested in the limit $n\ra\infty$ with $\tau\ra 0$ such that $t=4\tau n$ remains fixed. Then we should get
\begin{equation}
\label{eq:LFlimit}
\begin{pmatrix}
1\\0\\ 0\\ 0
\end{pmatrix}
\mapsto
\begin{pmatrix}
\rme^{-t/4}\\ 0 \\ 0 \\ \phi_t 
\end{pmatrix},
\end{equation}
with some function $s\mapsto\phi_t(s)$, but it turns out impossible to obtain the limit $\phi_t$ because~\eqref{eq:LFnth} does not converge as $n\ra \infty$. See Fig.~\ref{fig:LFenvironmentDD}.

Notice, however, that $\phi_{n,t}$ converges weakly to zero, that is $\langle f | \phi_{n,t}\rangle \to 0$ as $n\to\infty$ for all $f\in L^2(\R)$. Physically, one can interpret the behaviour of  the wave function $\phi_{n,t}$ as the result of   pumping  larger and larger energy in the system through the decoupling pulses. In the limit $n\to\infty$ the pumped energy becomes infinite and $\phi_{n,t}$ gets orthogonal to any given wave function.

\begin{figure}
\centering{}\includegraphics[width=0.8\columnwidth]{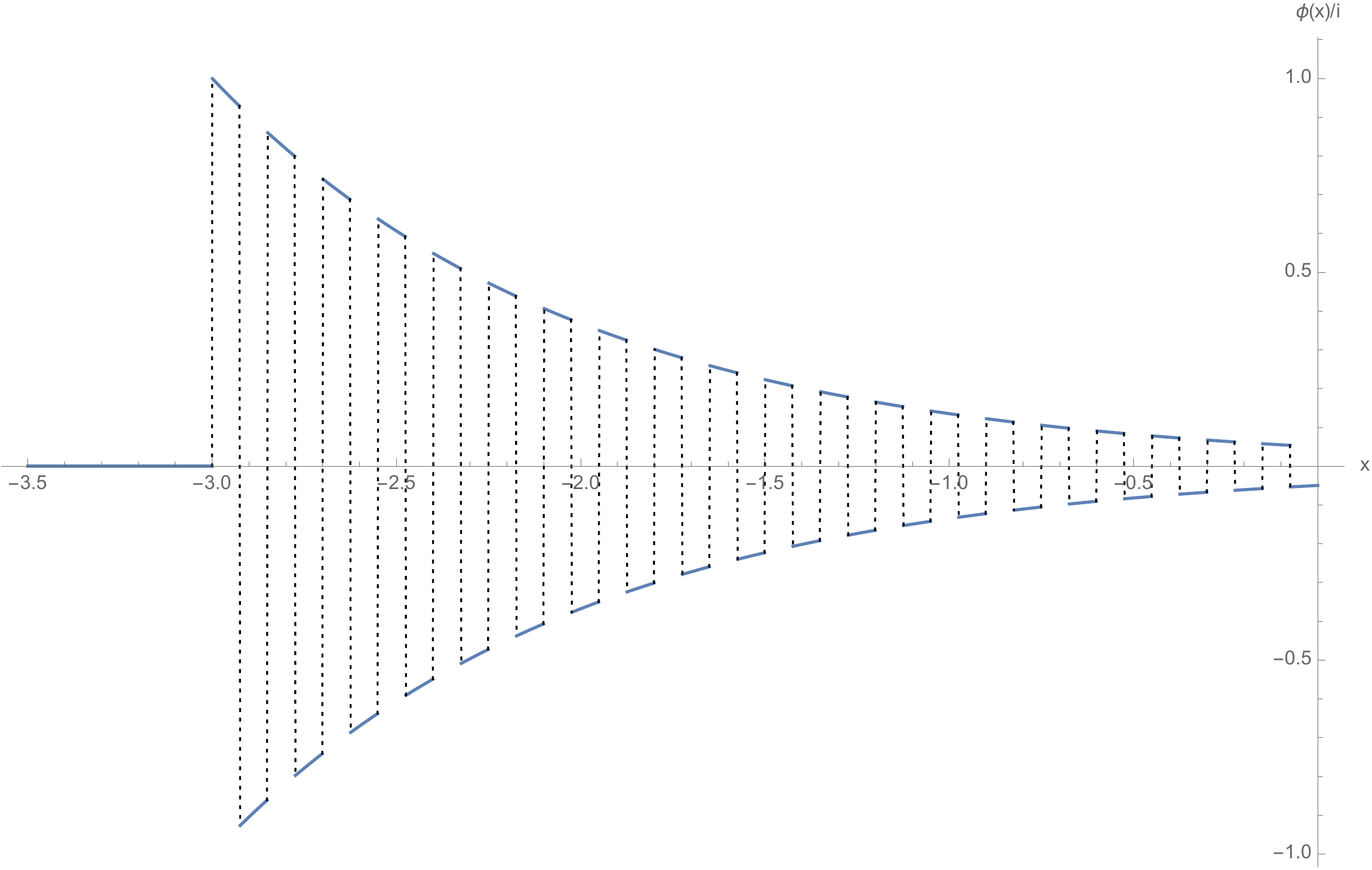}\caption{\label{fig:LFenvironmentDD}
Plot of the function $\phi_{n,t}(s)$ versus $s$. We set $t=6$ and $n=20$. 
}
\end{figure}

In any case, if dynamical decoupling worked then there would be a self-adjoint $B$ such that the time evolution of $(1,0,0,0)=(1,0)\t(1,0)$ is given by
\[
\begin{pmatrix}
1\\0
\end{pmatrix}
\t
\begin{pmatrix}
1\\0
\end{pmatrix}
\mapsto 
\begin{pmatrix}
1\\0
\end{pmatrix}
\t
\rme^{\rmi t B}
\begin{pmatrix}
1\\0
\end{pmatrix} 
=
\begin{pmatrix}
1\\0
\end{pmatrix}
\t
v(t)
\]
with some $v(t)\in\H_e$ of norm $1$; this in turn would be a vector of the form
\[
\begin{pmatrix}
* \\ * \\0\\0
\end{pmatrix}.
\]
Given that the first component was $\rme^{-t/4}$, we see that the second should be nonzero, which contradicts~\eqref{eq:LFlimit} where the second component is $0$. Therefore there cannot be such $B$ and thus dynamical decoupling does not work for this model.

\section{Conclusions}
We have provided  criteria and examples for dynamical decoupling of unbounded Hamiltonians. From a mathematical perspective, the ability to decouple is essentially a question of approximate commutativity and of operator domains. From a physical perspective, it is a question of interaction time-scales, but we saw that such time-scales cannot be revealed by looking at the reduced dynamics only. Moreover, even if the complete model is known and can be decoupled, such time-scales are very hard to compute in practice, because they depend explicitly on the environment initial state.

In practice, our results imply that many more systems can be protected from environmental noise than previously thought. In particular, seeing exponential decay in the lab should not stop one from trying to apply decoupling.   Whether or not it works on a feasible time-scale can be decided experimentally.

This paper is probably the first to discuss dynamical decoupling of unbounded Hamiltonians in a systematic way. We saw that the question of whether decoupling works is a very hard and delicate one and we believe a precise characterization, e.g.~an affirmative answer to Problem \ref{prob2}, is still a long way off; yet we provided some new and very useful methods to start with.

Apart from the physical motivation, our work also offers a refreshed view on the mathematics of Trotter product limits. On the one hand, established results on convergence can be embedded into a dynamical decoupling model through the construction (\ref{ex:sp}), and get a physical meaning. On the other hand, a proof of long-standing conjectures such as the generalization of \cite{Kato} to more than two generators would be highly relevant in dynamical decoupling.

\section*{Acknowledgments}

We acknowledge discussions with Martin Fraas, Jukka Kiukas, Marilena Ligab\`o, Alessio Serafini, John Gough and Benjamin Dive. PF thanks Kenji Yajima for some valuable remarks and for pointing out Chernoff's example discussed in Remark~\ref{rm:cher}. 
DB acknowledges support by the EPSRC Grant No. EP/M01634X/1.
PF was partially supported by INFN through the project ``QUANTUM'', and by the Italian National Group of Mathematical Physics (GNFM-INdAM).


\begin{thebibliography}{1}




\bibitem{ABH17} C. Arenz, D. Burgarth, R. Hillier. J. Phys. A: Math. Theor. \textbf{50}, 135303 (2017).

\bibitem{AHFB15} C. Arenz, R. Hillier, M. Fraas, D. Burgarth. Phys.
Rev. A \textbf{92}, 022102 (2015).

\bibitem{Che} P. Chernoff. \textit{Product formulas, nonlinear semigroups and addition of unbounded operators}. Memoirs AMS 140 (1974).

\bibitem{deOliveira}
C.R. de Olivera.  \textit{Intermediate spectral theory and quantum dynamics}. Progress in Mathematical Physics Vol. \textbf{54}. Birkh\"auser (2009).

\bibitem{Exner}P. Exner, H. Neidhardt and V. A. Zagrebnov. Integr.
Equ. Oper. Theory \textbf{69}, 451 (2011).

\bibitem{symplectic}
A. Ferraro S. Olivares, M.G.A. Paris. \textit{Gaussian states in quantum information}.
Napoli Series on physics and Astrophysics.
Bibliopolis, Napoli  (2005).

\bibitem{Friedrichs} K.O. Friedrichs. Comm. Pure Appl. Math. \textbf{1},
361 (1948).

\bibitem{GN17} J. Gough and H. Nurdin. Private communication.

 \bibitem{HW68} U. Haeberlen and J. S. Waugh. Phys. Rev. \textbf{175}, 453 (1968). 

\bibitem{ABH15} R. Hillier, C. Arenz, D. Burgarth. J. Phys. A: Math. Theor. \textbf{48}, 155301

\bibitem{HOT} D.T.  Hoa, H. Osaka, J.Tomiyama. Linear and Multilinear Algebra \textbf{63}, 1577 (2015).

\bibitem{JL} G.W. Johnson, M.L. Lapidus. \textit{The Feynman integral and Feynman's operation calculus}. Oxford University Press (2002).

\bibitem{Kato} T. Kato. In \textit{Topics in functional analysis}, Adv. Math. Supplementary Studies \textbf{3}, 185--195 (1978).

\bibitem{KM} T. Kato and K. Masuda. J. Math. Soc. Japan \textbf{30}, 169-178 (1978).

\bibitem{KL08} K. Kaveh and D. A. Lidar. Phys. Rev. A \textbf{78}, 012355 (2008).

\bibitem{Lee} T.D. Lee. Phys. Rev. \textbf{95}, 1329 (1956).

\bibitem{LB13} D. A. Lidar and T. A. Brun.
\textit{Quantum error correction}, Cambridge University Press, Cambridge
(2013).

\bibitem{Nel} E. Nelson. Ann. Math. \textbf{70}, 572-615 (1959).

\bibitem{Reed} M. Reed and B. Simon.
\textit{Functional analysis}, Academic Press, London, (1980).
 
 \bibitem{VKL99} L. Viola, E. Knill and S. Lloyd. Phys. Rev. Lett. \textbf{82}, 2417 (1999).

\bibitem{VK05} L. Viola and E. Knill. Phys. Rev. Lett. \textbf{94}, 060502 (2005).
 
 \bibitem{VKL98} L. Viola and S. Lloyd. Phys. Rev. A \textbf{58}, 2733 (1998).

\bibitem{Wal} W. von Waldenfels. \textit{A measure theoretical approach to quantum stochastic processes}. Springer, Berlin (2014).

 \bibitem{WHH68} J.S. Waugh, L. M. Huber and U. Haeberlen. Phys. Rev. Lett. \textbf{20}, 180 (1968). 





\end{thebibliography}
\end{document}